\documentclass[12pt,english]{article}
\usepackage[T1]{fontenc}
\usepackage[latin9]{inputenc}
\usepackage{amsmath}
\usepackage{amsthm}
\newtheorem{proposition}{Proposition}
\newtheorem{lemma}{Lemma}

\theoremstyle{definition}

\theoremstyle{remark}

\usepackage[authoryear,round]{natbib}
\usepackage{amssymb}
\usepackage[a4paper]{geometry}
\PassOptionsToPackage{normalem}{ulem}
\usepackage{ulem}
\usepackage{babel}
\usepackage{tikz}
\usepackage{pgfplots}
\usepackage{subcaption}
\pgfplotsset{compat=1.18}
\usepackage{xcolor}
\usepackage[
  colorlinks=true,
  linkcolor=black,
  citecolor=blue,
  urlcolor=blue
]{hyperref}

\begin{document}
\title{Two Equivalence Results between Unemployment Insurance and Wage Insurance}
\author{Anchi (Bryant) Xia\thanks{I thank Jan Morgan for helpful discussions that jumpstarted the note, Eric (``Big G'') Gao for encouraging me to actually finish writing this note, and Jessie Chen for putting up with me every time I bring this topic up in our conversations. I acknowledge generative AI use for proof writing and visual illustration. All mistakes are my own.}}

\maketitle

\section{Introduction}

In this short note, I show that in a \citet{mccall1970economics} model with risk-neutral agents, a system of wage insurance combined with unemployment insurance financed by a lump-sum tax on the employed can be replicated---in the \emph{ex ante} sense---by a system of unemployment insurance that depends on the agent's wage when last employed and a tax that depends on the agent's wage when employed. This holds with or without endogenous search when unemployed. I also show that wage insurance is not binding until the last earned wage exceeds a threshold, which may be of independent interest.

Loosely speaking, these equivalence results hold since wage insurance manipulates both the surplus from searching as well as the reservation wage, so the combination of unemployment insurance, which manipulates the reservation wage, and income taxes, which manipulates the surplus from searching, can be designed to replicate a transfer system that includes wage insurance.

A number of strong assumptions are made to establish my two equivalence results. The point is not to say that any wage insurance scheme can be replicated by a suitably chosen unemployment insurance and appropriate modifications to the income tax system, but rather to highlight the connection between wage insurance and unemployment insurance. I hope the results in this note can alert the reader that tools (and results) developed for the analysis of unemployment insurance may well apply to the study of wage insurance. More importantly, I hope these results provide benchmarks so that researchers can identify precisely through which economic channel wage insurance offers an improvement in their setting over and beyond what can be achieved by a cocktail of existing instruments.

The rest of this note is organized as follows. Section 2 introduces the model. Section 3 establishes the equivalence with exogenous job arrival when unemployed. Section 4 establishes the equivalence when search effort is endogenous.

\section{Model}

The model is a version of \citet{mccall1970economics}. Time is continuous. At $t=0$, all workers are unemployed with previous wage distributed according to $H$. The distribution of wage offers is denoted by $F$. Both distributions are assumed to be continuously differentiable. Employment is absorbing after acceptance. We now introduce the wage insurance before turning to the worker's problem.

\paragraph*{Wage insurance}

Let $z$ and $w$ denote the worker's past and current wages, respectively. The worker's net-of-tax consumption is then given by\footnote{Same formulation as \citet{hyman2024wage}.}
\begin{equation}
g_{\phi}\left(w,z\right)=w+\phi\left(z-w\right)_{+}-T=\begin{cases}
\left(1-\phi\right)w+\phi z-T & w\leq z\\
w-T & w>z
\end{cases}
\end{equation}
Here $x_{+}\equiv\max\left\{ x,0\right\} $ and $\phi \geq 0$ governs the generosity of the wage insurance system.

\paragraph*{Worker's problem}

The unemployed worker's indirect utility is defined via the following HJB equation:
\begin{equation}
rV^{U}\left(z\right)=\max_{\lambda}b\left(z\right)-\gamma\psi\left(\lambda\right)+\left(\lambda+\uline{\lambda}\right)\int_{\uline{w}_{\phi}\left(z\right)}\left[\frac{g_{\phi}\left(w,z\right)}{r}-V^{U}\left(z\right)\right]dF\left(w\right)
\end{equation}
Here $b\left(z\right)$ is the unemployment benefit, $\psi\left(\lambda\right)$ is the cost of searching, $\uline{\lambda}$ is a job arrival rate even if no search effort is exerted, and $\uline{w}_{\phi}\left(z\right)$ is the reservation wage for an agent with last wage $z$. $\psi$ is assumed to be strictly increasing and strictly convex. $\uline{w}_{\phi}\left(z\right)$ satisfies
\begin{equation}
\frac{g_{\phi}\left(\uline{w}_{\phi}\left(z\right),z\right)}{r}=V^{U}\left(z\right).
\end{equation}
In the no search effort case, I allow $b\left(z\right)$ to be flexible; in the case with endogenous search effort, I hold $b\left(z\right)$ to be constant. Lastly, $\gamma$ is a parameter governing the search process. $\gamma=\infty$ corresponds to the case of exogenous Poisson job arrival with rate $\uline{\lambda}$. $\gamma=1$ and $\uline{\lambda}=0$ nests conventional models with search effort.

\section{No search effort}

Let us first examine the case when the job arrival is exogenous, corresponding to the case where $\gamma=\infty$ so the optimal $\lambda$ is $0$. 

\begin{proposition}[Exogenous search effort]
    Suppose $\gamma=\infty$, so job offers arrive at the exogenous Poisson rate $\uline{\lambda}$. Fix any budget-balanced UI/WI policy $\left(b\left(\cdot\right),T,\phi\right)$ that induces reservation wage $\uline{w}_{\phi}\left(z\right)$ for each previous wage $z$. There exists a UI-only policy with previous-wage-contingent benefit $b^{*}\left(z\right)$ and a lump-sum current wage tax $T^{*}$ that induces the same reservation rule, satisfies the government budget in expectation at $t=0$, and delivers the same ex ante welfare.
\end{proposition}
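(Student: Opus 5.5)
The plan is to exploit risk neutrality. Ex ante welfare then equals expected discounted wages plus expected discounted net transfers. Budget balance in expectation at $t=0$ makes the transfer term zero, so welfare depends on the policy only through the reservation rule $\uline{w}_{\phi}(\cdot)$. It therefore suffices to construct $(b^{*}(\cdot),T^{*})$ that (i) induces exactly the reservation wage $\uline{w}_{\phi}(z)$ for every $z$ and (ii) balances the budget in expectation under $H$.

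\textbf{Step 1 (benefit schedule that implements the reservation rule).} Under a UI-only policy with $\gamma=\infty$, the consumption of an employed worker is $w-T^{*}$. The reservation wage $w^{*}(z)$ satisfies $(w^{*}(z)-T^{*})/r=V^{U}(z)$. Substituting this into the HJB equation gives the standard condition
\[
w^{*}(z)-T^{*}=b^{*}(z)+\frac{\uline{\lambda}}{r}\int_{w^{*}(z)}\left(w-w^{*}(z)\right)dF(w).
\]
Set $w^{*}(z)=\uline{w}_{\phi}(z)$ and define
\[
B(z)\equiv\uline{w}_{\phi}(z)-\frac{\uline{\lambda}}{r}\int_{\uline{w}_{\phi}(z)}\left(w-\uline{w}_{\phi}(z)\right)dF(w).
\]
Then take $b^{*}(z)=B(z)-T^{*}$. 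The map $x\mapsto x-\frac{\uline{\lambda}}{r}\int_{x}(w-x)dF(w)$ is strictly increasing, because its derivative is $1+\frac{\uline{\lambda}}{r}(1-F(x))>0$. Hence the reservation-wage equation has a unique solution, and it is $\uline{w}_{\phi}(z)$ by construction. This holds for any value of $T^{*}$.

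\textbf{Step 2 (choose $T^{*}$ for budget balance).} Let $p(z)=1-F(\uline{w}_{\phi}(z))$ be the acceptance probability. A worker with past wage $z$ has an expected discounted unemployment spell worth $1/(r+\uline{\lambda}p(z))$. The expected discounted employment period is worth $\uline{\lambda}p(z)/\left(r(r+\uline{\lambda}p(z))\right)$. The expected present value of the budget is then
\[
\int\left[\frac{B(z)-T^{*}}{r+\uline{\lambda}p(z)}-T^{*}\frac{\uline{\lambda}p(z)}{r(r+\uline{\lambda}p(z))}\right]dH(z)=\int\frac{B(z)}{r+\uline{\lambda}p(z)}dH(z)-\frac{T^{*}}{r}.
\]
Setting this to zero gives the unique value $T^{*}=r\int B(z)/(r+\uline{\lambda}p(z))\,dH(z)$.

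\textbf{Step 3 (welfare).} Write ex ante welfare as $\int V^{U}(z)\,dH(z)$. It equals expected discounted wage income plus expected discounted net transfers, and wage income depends only on the reservation rule. Both the original policy $(b,T,\phi)$ and $(b^{*},T^{*})$ are budget balanced in expectation and induce the same rule, so welfare coincides.

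I expect the main obstacle to be Step 3 for the original WI policy. Wage-insurance payments $\phi(z-w)_{+}$ persist throughout employment. So "budget-balanced" must be read as the expected present value of $b$ plus WI payments minus $T$ being zero under $H$. The accounting identity then needs to be checked with the same discounting as in Step 2. Once this is written out, namely per-worker PV of consumption equals PV of wages plus PV of transfers, the argument is mechanical.
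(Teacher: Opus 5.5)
Your proposal is correct and follows the paper's proof essentially step by step. You back out $b^{*}(z)$ from the reservation-wage indifference condition and get uniqueness from the strict monotonicity of $x\mapsto x-\frac{\uline{\lambda}}{r}\int_{x}(w-x)dF(w)$. You then use risk neutrality plus budget balance to reduce welfare in both economies to expected discounted wages, $\int\frac{\alpha(z)}{\alpha(z)+r}\frac{\mathbb{E}[W\mid W\geq\uline{w}_{\phi}(z)]}{r}dH(z)$, which depends only on the reservation rule. Your explicit computation that the budget equals $\int\frac{B(z)}{r+\uline{\lambda}p(z)}dH(z)-\frac{T^{*}}{r}$ is a small improvement: it makes transparent the paper's terse claim that a unique $T^{*}$ exists even though $b^{*}$ itself shifts with $T^{*}$.
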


\begin{proof}

The proof proceeds as follows. First, from necessary conditions, we will characterize the unemployment benefit schedule $b^{*}\left(z\right)$ needed to replicate the induced reservation $\uline{w}_{\phi}\left(z\right)$. We then show that under $b^{*}\left(z\right)$ and taxes $T^{*}$, $\uline{w}_{\phi}\left(z\right)$ is indeed the optimal solution. We conclude the argument by establishing ex ante welfare equivalence via an explicit calculation.

\paragraph*{Backing out $b^{*}\left(z\right)$}

We wish to first match the acceptance wage. Given unemployment benefit $b^{*}\left(z\right)$ and taxes $T^{*}$, if the acceptance wage were to be $\uline{w}_{\phi}\left(z\right)$, then the following relation must hold:
\begin{equation}
\uline{w}_{\phi}\left(z\right)-T^{*}=b^{*}\left(z\right)+\uline{\lambda}\int_{\uline{w}_{\phi}\left(z\right)}\left[\frac{w-\uline{w}_{\phi}\left(z\right)}{r}\right]dF\left(w\right).
\end{equation}
Rearranging, we have
\begin{equation}
b^{*}\left(z\right)=\uline{w}_{\phi}\left(z\right)-T^{*}-\uline{\lambda}\int_{\uline{w}_{\phi}\left(z\right)}\left[\frac{w-\uline{w}_{\phi}\left(z\right)}{r}\right]dF\left(w\right),
\end{equation}
The lump-sum tax on the employed, $T^{*}$, is in turn pinned down by budget balancing; we formally clarify its existence in the last step when showing ex ante welfare equivalence.

\paragraph*{Verifying the reservation wage match}

Given $b^{*}\left(z\right)$ and $T^{*}$, we now wish to verify that $\uline{w}_{\phi}\left(z\right)$ is indeed the reservation wage an agent with last wage $z$ would choose. To see this, suppose the reservation wage were to be $y$, then it solves
\begin{equation}
y-T^{*}-b^{*}\left(z\right)-\uline{\lambda}\int_{y}\left[\frac{w-y}{r}\right]dF\left(w\right)=0.
\end{equation}
It is straightforward to show that the derivative of the LHS w.r.t. $y$ is strictly positive, so since $\uline{w}_{\phi}\left(z\right)$ solves this equation, it is the unique solution. Since the verification is for an arbitrary $z$, we have shown the replication for all $z$.

\paragraph*{Ex ante welfare equivalence}

Having shown that we can replicate the reservation wage, it remains to establish ex ante welfare equivalence as well as the existence of a $T^{*}$ that balances the budget. Let us first calculate the ex ante welfare of the economy with no wage insurance. The date $0$ welfare is given by
\begin{equation}
\int\int_{0}^{\infty}\alpha\left(z\right)e^{-\alpha\left(z\right)t}\left\{ \int_{0}^{t}e^{-rs}b^{*}\left(z\right)ds+e^{-rt}\int_{\uline{w}_{\phi}\left(z\right)}\frac{w-T^{*}}{r}\frac{dF\left(w\right)}{1-F\left(\uline{w}_{\phi}\left(z\right)\right)}\right\} dtdH\left(z\right),
\end{equation}
where $\alpha\left(z\right)\equiv\uline{\lambda}\left(1-F\left(\uline{w}_{\phi}\left(z\right)\right)\right)$ is the Poisson arrival rate of an accepted job. Next, budget balancing requires
\begin{equation}
\int\int_{0}^{\infty}\alpha\left(z\right)e^{-\alpha\left(z\right)t}\left\{ -\int_{0}^{t}e^{-rs}b^{*}\left(z\right)ds+e^{-rt}\frac{T^{*}}{r}\right\} dtdH\left(z\right)=0.
\end{equation}
Since adjusting $T^{*}$ has no bearing on $\alpha\left(z\right)$ given our construction, it is easy to see there exists a unique $T^{*}$ satisfying the budget balancing condition above. Using the budget balancing relation, ex ante welfare is equivalently
\begin{equation}
\int\int_{0}^{\infty}\alpha\left(z\right)e^{-\alpha\left(z\right)t}e^{-rt}\int_{\uline{w}_{\phi}\left(z\right)}\frac{w}{r}\frac{dF\left(w\right)}{1-F\left(\uline{w}_{\phi}\left(z\right)\right)}dtdH\left(z\right).
\end{equation}
More concisely, ex ante welfare can be written as
\begin{equation}
\int\frac{\alpha\left(z\right)}{\alpha\left(z\right)+r}\frac{\mathbb{E}\left[W\bigg|W\geq\uline{w}_{\phi}\left(z\right)\right]}{r}dH\left(z\right).
\end{equation}
Next, we turn to the economy with wage insurance, the ex ante welfare is given by
\begin{equation}
\int\int_{0}^{\infty}\alpha\left(z\right)e^{-\alpha\left(z\right)t}\left\{ \int_{0}^{t}e^{-rs}b\left(z\right)ds+e^{-rt}\int_{\uline{w}_{\phi}\left(z\right)}\frac{g_{\phi}\left(w,z\right)}{r}\frac{dF\left(w\right)}{1-F\left(\uline{w}_{\phi}\left(z\right)\right)}\right\} dtdH\left(z\right).
\end{equation}
Budget balancing requires
{\footnotesize\begin{equation}
\int\int_{0}^{\infty}\alpha\left(z\right)e^{-\alpha\left(z\right)t}\left\{ -\int_{0}^{t}e^{-rs}b\left(z\right)ds+e^{-rt}\int_{\uline{w}_{\phi}\left(z\right)}\frac{T-\phi\left(z-w\right)_{+}}{r}\frac{dF\left(w\right)}{1-F\left(\uline{w}_{\phi}\left(z\right)\right)}\right\} dtdH\left(z\right)=0.
\end{equation}}
So once again, the ex ante welfare simplifies to
\begin{equation}
\int\frac{\alpha\left(z\right)}{\alpha\left(z\right)+r}\frac{\mathbb{E}\left[W\bigg|W\geq\uline{w}_{\phi}\left(z\right)\right]}{r}dH\left(z\right).
\end{equation}
This concludes the proof.
\end{proof}

\section{With search effort}

We now turn to the case with search effort. First, we will establish a few useful properties in the economy featuring wage insurance. Let 
\begin{equation}
S_{\phi}\left(z\right)\equiv\int_{\uline{w}_{\phi}\left(z\right)}\left[\frac{g_{\phi}\left(w,z\right)}{r}-V^{U}\left(z\right)\right]dF\left(w\right)
\end{equation}
denote the search surplus for an agent whose last wage is $z$,

\paragraph*{Pooling and Monotonicity}

\begin{lemma} 
The following statements regarding the equilibrium with wage insurance are true:
\begin{enumerate}
\item If $z\leq x_{0}$ for $x_{0}$ to be defined, then $\uline{w}_{\phi}\left(z\right)=x_{0}$. Otherwise, for $z>x_{0}$, $\uline{w}_{\phi}\left(z\right)<z$ and $\uline{w}_{\phi}^{\prime}\left(z\right)<0$.
\item $S_{\phi}^{\prime}\left(z\right)=0$ if $z\leq x_{0}$ and $S_{\phi}^{\prime}\left(z\right)>0$ if $z>x_{0}$.
\end{enumerate} 
\end{lemma}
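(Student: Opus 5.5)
Throughout, $b$ is constant (the endogenous-search case) and $\gamma<\infty$. The plan is to work with the reservation-wage condition written in terms of the integrand $g_{\phi}(w,z)-g_{\phi}(\uline{w}_{\phi}(z),z)$. This is the employed flow surplus over the reservation payoff, and it is what $S_{\phi}(z)$ integrates.

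\textbf{Step 1: the pooling region.} Define $x_{0}$ as the reservation wage of a worker with no wage insurance but facing tax $T$. It solves
\begin{equation}
rV^{U}=\max_{\lambda}\, b-\gamma\psi(\lambda)+(\lambda+\uline{\lambda})\int_{x_{0}}\frac{w-x_{0}}{r}\,dF(w),\qquad x_{0}-T=rV^{U}.
\end{equation}
To see that this pins down a unique $x_{0}$, apply the envelope theorem to the maximized right-hand side. Then $x_{0}-T-\max_{\lambda}\{\cdots\}$ is strictly increasing in $x_{0}$, with derivative $1+(\lambda^{*}+\uline{\lambda})(1-F(x_{0}))/r>0$.

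Now take $z\le x_{0}$ and guess $\uline{w}_{\phi}(z)=x_{0}$. Every accepted wage then satisfies $w\ge x_{0}\ge z$, so $g_{\phi}(w,z)=w-T$ on the acceptance set. The HJB equation is therefore exactly the no-WI equation, and $x_{0}$ satisfies it. Uniqueness comes from the same monotonicity argument as in the verification step of Proposition 1, applied with the envelope theorem for $\lambda$. Since nothing in the problem depends on $z$ in this region, $S_{\phi}$ is constant there, so $S_{\phi}'(z)=0$.

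\textbf{Step 2: the region $z>x_{0}$.} First I show $\uline{w}_{\phi}(z)<z$ by contradiction. Suppose $\uline{w}_{\phi}(z)\ge z$. Then WI never pays on accepted jobs, the problem collapses to the no-WI one, and the reservation wage must be $x_{0}<z$, a contradiction.

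So the reservation wage $y\equiv\uline{w}_{\phi}(z)$ lies below $z$. Using $g_{\phi}(y,z)=(1-\phi)y+\phi z-T$, it solves
\begin{equation}
G(y,z)\equiv(1-\phi)y+\phi z-T-b+\gamma\psi(\lambda^{*})-(\lambda^{*}+\uline{\lambda})\frac{1}{r}\int_{y}\left[g_{\phi}(w,z)-g_{\phi}(y,z)\right]dF(w)=0.
\end{equation}
The integrand equals $(1-\phi)(w-y)$ for $w\le z$ and $w-(1-\phi)y-\phi z$ for $w>z$. I will compute the two partial derivatives, again using the envelope theorem for $\lambda$.

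For $G_{y}$, the derivative of the integrand with respect to $y$ is $-(1-\phi)$ on the whole acceptance region, and the boundary term vanishes. This gives $G_{y}=(1-\phi)\bigl[1+(\lambda^{*}+\uline{\lambda})(1-F(y))/r\bigr]>0$ provided $\phi<1$, which I would assume (or otherwise handle separately).

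For $G_{z}$, the derivative of the integrand with respect to $z$ is $0$ for $w\le z$ and $-\phi$ for $w>z$. The kink at $w=z$ contributes nothing because the integrand is continuous there. This gives $G_{z}=\phi+(\lambda^{*}+\uline{\lambda})\phi(1-F(z))/r>0$.

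By the implicit function theorem, $\uline{w}_{\phi}'(z)=-G_{z}/G_{y}<0$.

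\textbf{Step 3: $S_{\phi}'$ for $z>x_{0}$.} Write $rS_{\phi}(z)=\int_{y}[g_{\phi}(w,z)-g_{\phi}(y,z)]\,dF(w)$ and differentiate totally. The boundary term vanishes, so
\begin{equation}
rS_{\phi}'(z)=-(1-\phi)(1-F(y))\,y'(z)-\phi(1-F(z)).
\end{equation}
Substituting $y'=-G_{z}/G_{y}$ gives $(1-\phi)y'=-\phi(1+\Lambda(1-F(z)))/(1+\Lambda(1-F(y)))$, where $\Lambda=(\lambda^{*}+\uline{\lambda})/r$. Hence
\begin{equation}
rS_{\phi}'(z)=\phi\left[\frac{(1-F(y))(1+\Lambda(1-F(z)))}{1+\Lambda(1-F(y))}-(1-F(z))\right]=\phi\,\frac{F(z)-F(y)}{1+\Lambda(1-F(y))}.
\end{equation}
This is strictly positive because $y<z$, provided $F$ puts mass on $(y,z)$; full support would suffice.

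\textbf{Main obstacle.} The main obstacle is the endogenous $\lambda^{*}$ and the uniqueness and differentiability of the fixed point. I plan to handle these by the envelope theorem together with strict convexity of $\psi$, which makes $\lambda^{*}$ unique and continuous. A secondary issue is the boundary case $z=x_{0}$, which needs continuity at the kink.
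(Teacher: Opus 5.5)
Your proposal is correct and follows essentially the same route as the paper. Both define $x_{0}$ as the no-WI reservation wage, use uniqueness to get pooling for $z\le x_{0}$ and (by contradiction) $\uline{w}_{\phi}(z)<z$ for $z>x_{0}$, then implicitly differentiate the reservation condition with the envelope theorem for $\lambda$ (your $G$ is the paper's $\Omega_{\phi}$, your $r\Lambda$ its $R'$) and substitute into the total derivative of $S_{\phi}$ to reach the same closed form. The side conditions you flag ($\phi<1$, and $F$ putting mass on $(\uline{w}_{\phi}(z),z)$), plus $\phi>0$ for the strict signs, are used implicitly by the paper as well.
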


Effectively, this lemma says that wage insurance only binds for those with a last earned wage that is high enough, and that search surplus is flat prior to this cutoff but strictly increases as a function of the last earned wage after this cutoff. These features are illustrated in Figure 1 below. We now turn to proving these claims.

\begin{figure}[h]
\centering

\begin{subfigure}[t]{0.48\textwidth}
\centering
\begin{tikzpicture}
\begin{axis}[
    width=\textwidth,
    height=0.72\textwidth,
    axis lines=left,
    xmin=0, xmax=8,
    ymin=0, ymax=5.2,
    xlabel={Last wage $z$},
    ylabel={Reservation wage $\underline w_\phi(z)$},
    xtick={3},
    xticklabels={$x_0$},
    ytick={3},
    yticklabels={$x_0$},
    tick style={black},
    axis line style={black},
    label style={font=\small},
    tick label style={font=\small},
    clip=false,
]
\fill[blue!5] (axis cs:3,0) rectangle (axis cs:8,5.2);

\addplot[
    domain=0:5.2,
    samples=2,
    dashed,
    gray!65,
    line width=0.8pt
] {x};

\addplot[
    domain=0:3,
    samples=2,
    black,
    line width=1.3pt
] {3};

\addplot[
    domain=3:8,
    samples=100,
    black,
    line width=1.3pt
] {3 - 1.05*(1 - exp(-0.55*(x-3)))};

\addplot[
    dashed,
    gray!60,
    line width=0.8pt
] coordinates {(3,0) (3,3)};

\node[gray!70, font=\scriptsize, anchor=west] at (axis cs:4.75,4.9) {$\underline w=z$};
\node[font=\scriptsize, anchor=south] at (axis cs:1.45,3.08) {pooling};
\node[font=\scriptsize, anchor=south west, align=left] at (axis cs:4.05,2.45)
    {$\underline w_\phi'(z)<0$};

\end{axis}
\end{tikzpicture}
\caption{Reservation wage}
\end{subfigure}
\hfill
\begin{subfigure}[t]{0.48\textwidth}
\centering
\begin{tikzpicture}
\begin{axis}[
    width=\textwidth,
    height=0.72\textwidth,
    axis lines=left,
    xmin=0, xmax=8,
    ymin=0, ymax=5.2,
    xlabel={Last wage $z$},
    ylabel={Search surplus $S_\phi(z)$},
    xtick={3},
    xticklabels={$x_0$},
    ytick={2},
    yticklabels={$S_\phi(x_0)$},
    tick style={black},
    axis line style={black},
    label style={font=\small},
    tick label style={font=\small},
    clip=false,
]
\fill[blue!5] (axis cs:3,0) rectangle (axis cs:8,5.2);

\addplot[
    domain=0:3,
    samples=2,
    black,
    line width=1.3pt
] {2};

\addplot[
    domain=3:8,
    samples=100,
    black,
    line width=1.3pt
] {2 + 2.15*(1 - exp(-0.45*(x-3)))};

\addplot[
    dashed,
    gray!60,
    line width=0.8pt
] coordinates {(3,0) (3,2)};

\node[font=\scriptsize, anchor=south] at (axis cs:1.45,2.08) {$S_\phi'(z)=0$};
\node[font=\scriptsize, anchor=south west, align=left] at (axis cs:4.15,3.2)
    {$S_\phi'(z)>0$};

\end{axis}
\end{tikzpicture}
\caption{Search surplus}
\end{subfigure}

\caption{Illustration of the pooling and monotonicity lemma. For $z\leq x_0$, wage insurance does not bind, the reservation wage is constant, and search surplus is flat. For $z>x_0$, wage insurance binds, the reservation wage declines, and search surplus increases.}
\end{figure}

\begin{proof}
We proceed by establishing each claim separately.
\paragraph*{Claim 1}
Let us start by establishing the first claim. Define
\begin{align}
S_{\phi}\left(x,z\right) & \equiv\int_{x}\left[\frac{g_{\phi}\left(w,z\right)-g_{\phi}\left(x,z\right)}{r}\right]dF\left(w\right)\\
R\left(S\right) & =\max_{\lambda}-\psi\left(\lambda\right)+\lambda S\\
\Omega_{\phi}\left(x,z\right) & =g_{\phi}\left(x,z\right)-b-R\left(S_{\phi}\left(x,z\right)\right)
\end{align}
Here $S_{\phi}\left(x,z\right)$ is the search surplus given acceptance wage $x$ and past wage $z$; $R\left(S\right)$ is the expected value form searching net of search costs; and $\Omega_{\phi}\left(x,z\right)$ is the residual of not searching versus searching if the agent accepts $x$ as the reservation wage. Furthermore, define $x_{0}$ as the solution to the following equation
\begin{equation}
x_{0}-T-b-R\left(\int_{x_{0}}\frac{w-x_{0}}{r}dF\left(w\right)\right)=0.
\end{equation}
Here $x_{0}$ can be interpreted as the acceptance wage absent wage insurance altogether. It is easy to verify $x_{0}$ is uniquely defined under our assumption.\footnote{$x_{0}$ as defined may be outside of the support of $F$, but this is fine for our purpose.} Now, if $z\leq x_{0}$, then every accepted wage satisfies $w\geq x_{0}\geq z$, so wage insurance in fact does not bind, and by uniqueness we must have
\begin{equation}
\uline{w}_{\phi}\left(z\right)=x_{0}\quad\forall z\leq x_{0}.
\end{equation}
If $z>x_{0}$, we claim $\uline{w}_{\phi}\left(z\right)<z$. Otherwise, if $\uline{w}_{\phi}\left(z\right)\geq z$, wage insurance is again not binding, so by uniqueness we have $\uline{w}_{\phi}\left(z\right)=x_{0}$, a contradiction. Next we wish establish that $\uline{w}_{\phi}^{\prime}\left(z\right)<0$ provided $z>x_{0}$. Observe that for this range of past wages, we have
\begin{align}
S_{\phi x}\left(\uline{w}_{\phi}\left(z\right),z\right) & =-\frac{1}{r}\left(1-\phi\right)\left(1-F\left(\uline{w}_{\phi}\left(z\right)\right)\right)\\
S_{\phi z}\left(\uline{w}_{\phi}\left(z\right),z\right) & =-\frac{1}{r}\phi\left(1-F\left(z\right)\right)
\end{align}
Implicitly differentiating the following relation 
\begin{equation}
\Omega_{\phi}\left(\uline{w}_{\phi}\left(z\right),z\right)=0,
\end{equation}
we obtain
\begin{equation}
\uline{w}_{\phi}^{\prime}\left(z\right)=-\frac{R^{\prime}\left(S_{\phi}\left(\uline{w}_{\phi}\left(z\right),z\right)\right)\phi\left(1-F\left(z\right)\right)+r\phi}{r\left(1-\phi\right)+R^{\prime}\left(S_{\phi}\left(\uline{w}_{\phi}\left(z\right),z\right)\right)\left(1-\phi\right)\left(1-F\left(\uline{w}_{\phi}\left(z\right)\right)\right)}<0
\end{equation}
since $R^{\prime}=\lambda\geq$0 by the envelope theorem. This concludes our proof for the first claim.

\paragraph*{Claim 2}

The first part of the second claim is immediate given the first claim. It remains to show $S_{\phi}^{\prime}\left(z\right)>0$ if $z>x_{0}$. To see this,
{\footnotesize\begin{align}
S_{\phi}^{\prime}\left(z\right) & =S_{\phi x}\left(\uline{w}_{\phi}\left(z\right),z\right)\uline{w}_{\phi}^{\prime}\left(z\right)+S_{\phi z}\left(\uline{w}_{\phi}\left(z\right),z\right)\\
 & =\frac{1}{r}\left(1-\phi\right)\left(1-F\left(\uline{w}_{\phi}\left(z\right)\right)\right)\frac{R^{\prime}\left(S_{\phi}\left(\uline{w}_{\phi}\left(z\right),z\right)\right)\phi\left(1-F\left(z\right)\right)+r\phi}{r\left(1-\phi\right)+R^{\prime}\left(S_{\phi}\left(\uline{w}_{\phi}\left(z\right),z\right)\right)\left(1-\phi\right)\left(1-F\left(\uline{w}_{\phi}\left(z\right)\right)\right)}-\frac{1}{r}\phi\left(1-F\left(z\right)\right)
\end{align}}
Let $A\equiv1-F\left(\uline{w}_{\phi}\left(z\right)\right)$ and $B\equiv1-F\left(z\right)$. On the active region, $\uline{w}_{\phi}\left(z\right)<z$, so $A>B$. The expression can be rewritten as
\begin{equation}
S_{\phi}^{\prime}\left(z\right)=\frac{1}{r}\frac{\left(1-\phi\right)r\phi\left(A-B\right)}{r\left(1-\phi\right)+R^{\prime}\left(S_{\phi}\left(\uline{w}_{\phi}\left(z\right),z\right)\right)\left(1-\phi\right)A}>0,
\end{equation}
where the inequality follows from $R^{\prime}\left(S_{\phi}\left(\uline{w}_{\phi}\left(z\right),z\right)\right)>0$ and $A>B$.
\end{proof}

Armed with this result, we are ready to establish the equivalence in the case with endogenous search. 

\begin{proposition}[Endogenous search effort]
    Suppose the primitive assumptions in Section 1 hold and $\gamma=1$ and $\uline{\lambda}=0$, so the search effort is endogenous. Fix any budget-balanced UI/WI policy $\left(b,T,\phi\right)$ that induces reservation wage $\uline{w}_{\phi}\left(z\right)$ and search intensity $\lambda_{\phi}\left(z\right)$ for each previous wage $z$. There exists a UI-only policy with previous-wage-contingent benefit $b^{*}\left(z\right)$ and current-wage tax $\tau^{*}\left(w\right)$ that induces the same reservation rule and search intensity, satisfies the government budget in expectation at $t=0$, and delivers the same ex ante welfare.
\end{proposition}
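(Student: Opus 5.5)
The plan follows the proof of the Proposition for exogenous search, with one extra requirement. With endogenous effort, matching the reservation rule is not enough: I must also match the search surplus $S_{\phi}(z)$, because by the envelope theorem $\lambda_{\phi}(z)=R^{\prime}(S_{\phi}(z))$. The single lump-sum tax $T^{*}$ of the earlier result has no freedom to do this. The extra instrument is the current-wage tax $\tau^{*}(w)$, equivalently the net-income schedule $n(w)\equiv w-\tau^{*}(w)$. Under a UI-only policy, an agent with last wage $z$ who uses acceptance wage $x$ gets surplus
\[
S^{*}(x)=\int_{x}\frac{n(w)-n(x)}{r}dF(w),
\]
which depends on $z$ only through $x$. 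So the task is to choose $n$ with $S^{*}(\uline{w}_{\phi}(z))=S_{\phi}(z)$ for every $z$. Then $b^{*}(z)$ is set so that $\uline{w}_{\phi}(z)$ solves the indifference condition, as in the earlier proof.

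The Lemma makes this feasible. For $z\le x_{0}$ the reservation wage is pooled at $x_{0}$ and $S_{\phi}$ is flat. On this region I take $n(w)=w-T$ for $w\ge x_{0}$, so no accepted wage is insured and $S^{*}(x_{0})=S_{\phi}(z)$ automatically. For $z>x_{0}$, $\uline{w}_{\phi}$ is strictly decreasing, hence invertible onto its range $(\underline{x},x_{0})$. I define $\sigma(x)\equiv S_{\phi}(\uline{w}_{\phi}^{-1}(x))$ there. Since $S_{\phi}^{\prime}>0$ and $\uline{w}_{\phi}^{\prime}<0$, we have $\sigma^{\prime}<0$.

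To match $S^{*}=\sigma$ on this range, I differentiate $S^{*}$, which gives $S^{*\prime}(x)=-n^{\prime}(x)(1-F(x))/r$. So I impose
\[
n^{\prime}(x)=-\frac{r\,\sigma^{\prime}(x)}{1-F(x)}>0\quad\text{for }x\in(\underline{x},x_{0}),
\]
and integrate downward from the boundary value $n(x_{0})=x_{0}-T$. Continuity of $S^{*}$ at $x_{0}$, combined with the ODE, then gives $S^{*}=\sigma$ on the whole range. Wages below $\underline{x}$ are never accepted, so $n$ can be extended there arbitrarily, for instance continuously and increasing. Finally I set
\[
b^{*}(z)=n(\uline{w}_{\phi}(z))-R(S_{\phi}(z)).
\]

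Verification mirrors the exogenous case. For fixed $z$, the residual $n(x)-b^{*}(z)-R(S^{*}(x))$ has $x$-derivative
\[
n^{\prime}(x)\left[1+R^{\prime}(S^{*}(x))(1-F(x))/r\right]>0,
\]
using $n^{\prime}>0$ on the relevant range. So $\uline{w}_{\phi}(z)$ is the unique reservation wage, and $\lambda^{*}(z)=R^{\prime}(S^{*}(\uline{w}_{\phi}(z)))=\lambda_{\phi}(z)$.

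For the budget there is a free constant: replacing $(n,b^{*})$ by $(n-c,b^{*}-c)$ leaves every surplus and every indifference condition unchanged, hence leaves the allocation unchanged. The expected budget at $t=0$ is strictly monotone in $c$, so a unique $c$ balances it. For welfare I repeat the explicit calculation from the proof for exogenous search. The accepted-job hazard is now $\alpha(z)=\lambda_{\phi}(z)(1-F(\uline{w}_{\phi}(z)))$, and there is an extra flow term $-\psi(\lambda_{\phi}(z))$ during unemployment. Adding the budget constraint to welfare cancels all transfers in both economies. This leaves
\[
\int\frac{1}{\alpha(z)+r}\left[\alpha(z)\frac{\mathbb{E}[W\mid W\ge\uline{w}_{\phi}(z)]}{r}-\psi(\lambda_{\phi}(z))\right]dH(z)
\]
in both economies, because both the reservation rule and the search intensity coincide.

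I expect the main obstacle to be the construction of $n$ near and within the active region. One point is regularity: the existence of $\sigma^{\prime}$ needs $\uline{w}_{\phi}^{\prime}\neq0$, which the Lemma gives for $z>x_{0}$. Another is the possible kink at $x_{0}$: there I would only require continuity of $n$, not differentiability, and check that $S^{*}$ remains continuous. A further issue is whether $\underline{x}$ falls in a region where $F$ places mass or where $1-F(x)=0$. If so, I would restrict attention to the support and note that $n$ off the support is irrelevant.
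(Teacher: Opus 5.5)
Your proposal is correct and follows the paper's proof essentially step for step: you match search surpluses at each reservation wage via the ODE $n'(\uline{w}_{\phi}(z))=-rS_{\phi}'(z)/[\uline{w}_{\phi}'(z)(1-F(\uline{w}_{\phi}(z)))]$, whose sign the Lemma guarantees; back out $b^{*}(z)=n(\uline{w}_{\phi}(z))-R(S_{\phi}(z))$; verify uniqueness through monotonicity of the residual; and fix the common level constant by budget balance before cancelling transfers in welfare. The only deviation is that above $x_{0}$ you keep $n(w)=w-T$, which leaves a kink at $x_{0}$ because the ODE drives $n'(x)\to 0$ as $x\uparrow x_{0}$, whereas the paper uses an extension with $f'(0)=0$ to keep $q$ differentiable. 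Your observation that continuity plus strict monotonicity suffices handles this, provided the extension below $\underline{x}$ is also increasing rather than truly ``arbitrary''.
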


\begin{proof}
The proof largely mirrors the earlier proposition. First, we will construct a consumption function ($q\left(w\right)=w-\tau^{*}\left(w\right))$ to match the search surplus at every reservation wage. Next, from necessary conditions, we deduce an accompanying schedule $b^{*}\left(z\right)$. We then verify that under our proposed system of unemployment insurance and taxes, the resulting reservation wage and search surplus for each type $z$ indeed matches those in the baseline economy featuring wage insurance. We conclude the proof by establishing welfare equivalence. 

\paragraph*{Constructing the consumption function}

The agent whose last wage is $z$ has the following HJB equation:
\begin{equation}
rV^{U}\left(z\right)=\max_{\lambda}b\left(z\right)-\psi\left(\lambda\right)+\lambda\int_{\uline{w}_{\phi}\left(z\right)}\left[\frac{g_{\phi}\left(w,z\right)}{r}-V^{U}\left(z\right)\right]dF\left(w\right),
\end{equation}
where 
\begin{equation}
\frac{g_{\phi}\left(\uline{w}_{\phi}\left(z\right),z\right)}{r}=V^{U}\left(z\right).
\end{equation}
Let $S_{\phi}\left(z\right)\equiv\int_{\uline{w}_{\phi}\left(z\right)}\left[\frac{g_{\phi}\left(w,z\right)}{r}-V^{U}\left(z\right)\right]dF\left(w\right)$ denote the surplus from searching in this economy. The agent's FOC pinning down search effort is given by
\begin{equation}
\psi^{\prime}\left(\lambda\right)=S_{\phi}\left(z\right).
\end{equation}
Turning to an economy with only a flexible tax system $\tau^{*}\left(w\right)$ and an unemployment insurance $b^{*}\left(z\right)$ dependent on the last wage, the HJB is
\begin{equation}
r\tilde{V}^{U}\left(z\right)=\max_{\lambda}b^{*}\left(z\right)-\psi\left(\lambda\right)+\lambda\int_{\uline{\tilde{w}}\left(z\right)}\left[\frac{q\left(w\right)}{r}-\tilde{V}^{U}\left(z\right)\right]dF\left(w\right),
\end{equation}
where $q\left(w\right)=w-\tau^{*}\left(w\right)$ and 
\begin{equation}
\frac{q\left(\uline{\tilde{w}}\left(z\right)\right)}{r}=\tilde{V}^{U}\left(z\right).
\end{equation}
Let $\tilde{S}\left(z\right)\equiv\int_{\uline{\tilde{w}}\left(z\right)}\left[\frac{q\left(w\right)}{r}-\tilde{V}^{U}\left(z\right)\right]dF\left(w\right)$ denote the search surplus in this economy. The agent's FOC is once again
\begin{equation}
\psi^{\prime}\left(\lambda\right)=\tilde{S}\left(z\right).
\end{equation}
If we can match the reservation wages $\uline{\tilde{w}}\left(z\right)=\uline{w}_{\phi}\left(z\right)$ and search surpluses $\tilde{S}\left(z\right)=S_{\phi}\left(z\right)$, then we have shown that individuals of all types exhibit the same search and acceptance behavior in both economies. We begin by constructing a consumption function such that
\begin{equation}
\int_{\uline{w}_{\phi}\left(z\right)}\frac{q\left(w\right)-q\left(\uline{w}_{\phi}\left(z\right)\right)}{r}dF\left(w\right)=S_{\phi}\left(z\right).
\end{equation}
That is, at every reservation wage in the baseline economy, $\uline{w}_{\phi}\left(z\right)$, if the reservation wage were to be equivalent in the counterfactual economy, then the search surpluses also agree. Additionally, as will be made clear in the verification step, we would also like $q$ to be strictly increasing. 

Guided by our earlier lemma, we shall split the construction separately for $z\leq x_{0}$ and $z>x_{0}$. First, when $z>x_{0}$, differentiating the desired relation, we have
\begin{equation}
-\left(1-F\left(\uline{w}_{\phi}\left(z\right)\right)\right)q^{\prime}\left(\uline{w}_{\phi}\left(z\right)\right)\uline{w}_{\phi}^{\prime}\left(z\right)=rS_{\phi}^{\prime}\left(z\right)
\end{equation}
So
\begin{equation}
q^{\prime}\left(\uline{w}_{\phi}\left(z\right)\right)=-\frac{rS_{\phi}^{\prime}\left(z\right)}{\uline{w}_{\phi}^{\prime}\left(z\right)\left(1-F\left(\uline{w}_{\phi}\left(z\right)\right)\right)}>0,
\end{equation}
where the inequality follows from our earlier lemma. Next, on the range where $z\leq x_{0}$, again by the earlier lemma, we know $\uline{w}_{\phi}\left(z\right)=x_{0}$ and the search surplus is a constant, so
\begin{equation}
\int_{x_{0}}\left(q\left(w\right)-q\left(x_{0}\right)\right)dF\left(w\right)=\int_{x_{0}}\left(w-x_{0}\right)dF\left(w\right)
\end{equation}
We can simply choose an extension to the $q$ function of the form
\begin{equation}
q\left(w\right)=q\left(x_{0}\right)+f\left(w-x_{0}\right)\quad\forall w\geq x_{0}
\end{equation}
where $f$ is any strictly increasing and differentiable function such that $f(0)=0, f^\prime(0)=0$, and
\begin{equation}
\int_{x_{0}}f\left(w-x_{0}\right)dF\left(w\right)=\int_{x_{0}}\left(w-x_{0}\right)dF\left(w\right).
\end{equation}
All told, we have implicitly defined $q$ and therefore $\tau^{*}$ over $\left[\uline{w}_{\phi}\left(\bar{z}\right),\bar{w}\right]$ where $\bar{z}$ is the supremum of the support of $H$ and $\bar{w}$ is the supremum of the support of $W$. If $\uline{w}_{\phi}\left(\bar{z}\right)>\uline{w}$, where $\uline{w}$ is the infimum of the support of $W$, we can choose any extension so that $q$ remains a strictly increasing and differentiable function. This completes the construction of our consumption function as well as the implied tax schedule.

\paragraph*{Recovering the unemployment benefits}

To recover the schedule of unemployment benefits, let $\uline{\tilde{w}}\left(z\right)=\uline{w}_{\phi}\left(z\right)$. Then,
\begin{equation}
q\left(\uline{w}_{\phi}\left(z\right)\right)=\max_{\lambda}b^{*}\left(z\right)-\psi\left(\lambda\right)+\lambda\int_{\uline{w}_{\phi}\left(z\right)}\left[\frac{q\left(w\right)-q\left(\uline{w}_{\phi}\left(z\right)\right)}{r}\right]dF\left(w\right),
\end{equation}
Rearranging and letting $\lambda\left(z;q\right)$ denote the maximized choice of search effort under our constructed consumption function $q$, we have
\begin{equation}
b^{*}\left(z\right)=q\left(\uline{w}_{\phi}\left(z\right)\right)+\psi\left(\lambda\left(z;q\right)\right)-\lambda\left(z;q\right)\int_{\uline{w}_{\phi}\left(z\right)}\left[\frac{q\left(w\right)-q\left(\uline{w}_{\phi}\left(z\right)\right)}{r}\right]dF\left(w\right).
\end{equation}

\paragraph*{Verifying the reservation wage and the search effort match for each type}

Up to level, we have pinned down unemployment insurance and taxes. The level indeterminacy comes from the fact that we can add the same constant to both schedules, which leaves the HJB unchanged and the search surplus unchanged. We will return to the level determination in the budget balancing step. For now, we wish to verify that under our constructed consumption function $q$ and benefits $b^{*}$, we indeed induce the same reservation wage and search effort. First, let us verify the reservation wage indeed matches. Suppose the reservation wage were to be $y$, then the optimal search effort is
\begin{equation}
\tilde{\lambda}\left(y\right)=\arg\max_{\lambda}-\psi\left(\lambda\right)+\lambda\int_{y}\left[\frac{q\left(w\right)-q\left(y\right)}{r}\right]dF\left(w\right)
\end{equation}
and the optimal reservation wage solves
\begin{equation}
q\left(y\right)-b^{*}\left(z\right)+\psi\left(\tilde{\lambda}\left(y\right)\right)-\tilde{\lambda}\left(y\right)\int_{y}\left[\frac{q\left(w\right)-q\left(y\right)}{r}\right]dF\left(w\right)=0.
\end{equation}
If $q$ is strictly increasing, then the derivative of RHS w.r.t. $y$ is
\begin{equation}
q^{\prime}\left(y\right)\left(1+\tilde{\lambda}\left(y\right)\frac{1-F\left(y\right)}{r}\right)+\tilde{\lambda}^{\prime}\left(y\right)\underbrace{\left(\psi^{\prime}\left(\tilde{\lambda}\left(y\right)\right)-\int_{y}\left[\frac{q\left(w\right)-q\left(y\right)}{r}\right]\right)}_{=0}>0,
\end{equation}
where the inequality follows from $q^{\prime}>0$. Since $y=\uline{w}_{\phi}\left(z\right)$ solves the equation, it is the unique solution. Having verified for an arbitrary $z$, we conclude that $\uline{\tilde{w}}\left(z\right)=\uline{w}_{\phi}\left(z\right)$. The equivalence on search effort is implied by our construction.

\paragraph*{Ex ante welfare equivalence}

Establishing ex ante welfare equivalence follows the same steps as proposition 1, except (i) the tax schedule in our modified economy is dependent on income, (ii) $\alpha\left(z\right)$ needs to be redefined as $\alpha\left(z\right)=\lambda\left(z\right)\left(1-F\left(\uline{w}_{\phi}\left(z\right)\right)\right)$, where $\lambda\left(z\right)=\lambda_\phi(z)$ is the optimized search effort we have matched across the two economies already, and (iii) we need to subtract off the search cost. Finally, regarding the normalization constant for our constructed tax and unemployment schedules, note that budget balancing requires
{\footnotesize \begin{equation}
\int\int_{0}^{\infty}\alpha\left(z\right)e^{-\alpha\left(z\right)t}\left\{ -\int_{0}^{t}e^{-rs}\left(b^{*}\left(z\right)+C\right)ds+e^{-rt}\int_{\uline{w}_{\phi}\left(z\right)}\frac{\tau^{*}\left(w\right)-C}{r}\frac{dF\left(w\right)}{1-F\left(\uline{w}_{\phi}\left(z\right)\right)}\right\} dtdH\left(z\right)=0.
\end{equation}}
Since we have argued that $\alpha\left(z\right)$ is invariant to $C$, the unique level of $C$ is pinned down by the budget balancing equation above. This concludes the proof.
\end{proof}

\bibliographystyle{plainnat}
\bibliography{references}

\end{document}